\newtheorem{definition}{Definition}[section]
\newtheorem{remark}{Remark}[section]
\newtheorem{theorem}{Theorem}[section]
\newtheorem{lemma}{Lemma}[section]
\begin{document}

\title{A Necessary and Sufficient Condition for Entanglement Witness}
\author{F. Masillo}
\email{masillo@le.infn.it}
\affiliation{Dipartimento di Fisica,  Universit\`{a} del Salento, I-73100 Lecce, Italy}

\date{\today}
\keywords{Quantum Entanglement, Positive Maps, Entanglement Witness}

\begin{abstract}
An Entanglement Witness (EW) is an observable that permits the identification of Entangled States.
While the structure of separable (non Entangled) states is very complicated and not yet completely understood,   the qubit, \emph{i.e.} the simplest non trivial quantum system, has a satisfactory geometrical  description in terms of Bloch spheres. In this paper, we will show that this geometrical representation of  qubits  can be used in understanding  separable states and, in particular, in the formulation of  a simple necessary and sufficient condition for EW.
\end{abstract}

\maketitle

%\linenumbers

\section*{Introduction}

Quantum  entanglement \cite{EPR35}, \cite{S35} is the responsible of the most fascinating quantum mechanical effects \cite{E91,BW92, BBCJP93, EJ96} and, in particular, it  plays a crucial role in all quantum informational processes that have a significant speed up with respect to the analogous classical ones \cite{NC00}.
It follows immediately that the identification and  manipulation of entangled states play a crucial role in this fascinating field of modern physics.

The search of valid criteria for the  identification of entanglement is a long history \cite{HHHH09} and simple operational criteria exist only  for low dimensional cases \cite{P96,HHH96}.

In this paper we concentrate our attention only on the set of \emph{entanglement witness} (EW) \cite{P96,HHH96,T00},
\emph{i.e.} on the set of linear functional, defined  on the set of states of a multipartite quantum systems, that are  %definite
positive on the set of separable states.

The existence of EW  can be proven applying  the Hahn-Banach theorem to the convex set of separable states and the compact convex set containing only one entangled state \cite{HHH96}.
In particular, as an immediate consequence, it is possible to prove that  for every entangled state at least an EW  exists  that ``separates'' it from the set of separable states.
Moreover, noticing that EW are observables, the identification of entangled states is reduced to  measure  particular physical quantities.

Another important motivation to investigate EW is due to the famous Choi-Jamio{\l}kowski (CJ) isomorphism \cite{J74,C82}.
This last in fact, permits the construction of positive, but not completely positive (CP),  maps starting from the set of bipartite EW and vice versa \cite{HHH96}.

We recall briefly that a map $\Lambda$ is positive if and only if transform positive operators into positive operators  (a classical example of positive map is Transposition).

%
%\begin{equation}
%	X\,\,\mbox{is a positive operator}\Rightarrow \Lambda(X)\,\,\mbox{is a positive operator},
%\end{equation}

It is a relevant result that for low dimensional cases  every positive map $\Lambda$  assumes the following form \cite{S63,W76}:
\begin{equation}
\Lambda=\Lambda^{(1)}_{CP}+\Lambda^{(2)}_{CP}\circ T
\end{equation}
where $\Lambda^{(1)}_{CP},\Lambda^{(2)}_{CP}$ are CP maps, \emph{i.e.} maps that assume the Kraus-Stinespring form \cite{KBDW83}, and $T$ is transposition. Unfortunately this result cannot be extended for higher dimensional systems \cite{HHH96}.

The aim of this paper is to introduce a simple criterion to test if an observable is an EW. In particular, the paper is organized as follows.
In the first section  we give some definitions and lemmas used throughout  the paper. In  section \ref{sec.2}, basing on the simple case of a qubit, we  introduce the concept of \emph{separable tangent space}.
In sections \ref{sec.4} and \ref{sec.5} we present the main result of the paper, in particular the first one  is dedicated to prove the necessary part of the main theorem while the second regards the  sufficient condition.
Some concluding remarks are drawn in the final section.

\section{Some definitions}\label{sec.1}

In what follows we will indicate with $(\mathscr{H}^S,\langle\cdot|\cdot\rangle)$, or simply $\mathscr{H}^S$,
the Hilbert space associated to a quantum  system $S$.
It is natural to associate to an $n-$dimensional Hilbert space $\mathscr{H}$ the linear space $\mathcal{L}(\mathscr{H})$ of endomorphism on $\mathscr{H}$.
The space   $\mathcal{L}(\mathscr{H})$ is naturally endowed with the inner product:
\begin{equation}
(a,b)=\mathrm{Tr}(a^\dag b).
\end{equation}
It is simple to prove that the couple $(\mathcal{L}(\mathscr{H}),(\cdot,\cdot))$
is an $n^2-$dimensional complex Hilbert space.

A particularly interesting subset  of $\mathcal{L}(\mathscr{H})$ is
\begin{equation}
\mathcal{H}(\mathscr{H})=\{h\in\mathcal{L}(\mathscr{H}):h=h^\dag\}.
\end{equation}
This is the real linear space of Hermitian operators on $\mathscr{H}$. As it is well known, the set of Hermitian operators, in quantum mechanics, correspond to the set of all possible \emph{observables}.

A relevant subset of $\mathcal{H}(\mathscr{H})$ is:
\begin{equation}
\mathcal{H}^+(\mathscr{H})=\{a\in\mathcal{H}: \langle\psi|a|\psi\rangle\geq0, \forall\psi\in \mathscr{H}\}.
\end{equation}
This is the set of positive operators on $\mathscr{H}$. In particular, the set
$\mathcal{S}(\mathscr{H})$ of positive trace one operators will be called the \emph{set of states}.

Naturally all the previous definitions can be extended to the Hilbert space $\mathscr{H}^{\Sigma_{i=1}^k(i)}=\bigotimes_{i=1}^k\mathscr{H}^{(i)}$
associated to the  multipartite system
$S^{(\Sigma_{i=1}^k(i))}=\sum_{i=1}^k S^{(i)}$.

For typographical convenience, in what follows we put
\begin{equation}
\mathscr{H}^{(l\div m)}=\mathscr{H}^{\Sigma_{i=l}^m(i)}=\bigotimes_{i=l}^m\mathscr{H}^{(i)}
\end{equation}

We recall now the definitions of two important subsets of  $\mathcal{S}(\mathscr{H}^{(1\div k)})$:
\begin{definition}
We say \cite{W98} that a state $\varrho\in\mathcal{S}(\mathscr{H}^{(1\div k)})$
 is separable if and only if
\begin{equation}
\varrho=\sum_{i}p_{i}\bigotimes_{j=1}^k\varrho_i^{(j)},\,\,\,\,\,\,\,\,\,
\,\,\,  p_{i}\geq0,\,\, \Sigma_{i}p_{i}=1,
\end{equation}
where $\forall j=1,\ldots,k$ and $\forall i$, $\varrho^{(j)}_i\in \mathcal{S}({\mathscr H}^{(j)})$.
\end{definition}
The set of separable states will be denoted  by $\mathfrak{S}^{(1\div k)}$.

It is simple to note that $\mathfrak{S}^{(1\div k)}$ is the convex hull of the set $\mathfrak{P}^{(1\div k)}$ of pure product states:
\begin{align}
\mathfrak{P}^{(1\div k)}=&\left\{p=\bigotimes_{j=1}^k|e^{(j)}\rangle\langle e^{(j)}|\mbox{for some}\,\,|e^{(j)}\rangle\in{\mathscr H}^{(j)}, \right.\nonumber\\ &\left.\begin{array}{c}
\\
\\
\\
\end{array}\langle e^{(j)}|e^{(j)}\rangle=1,\forall j=1,\ldots,k                                                                      \right\}.
\end{align}

\begin{definition} We say \cite{W98} that a state $\varrho\in\mathcal{S}(\mathscr{H}^{(1\div k)})$
is entangled if and only if
\begin{equation}
    \varrho\in \mathfrak{E}(\mathscr{H}^{(1\div k)})= \mathcal{S}(\mathscr{H}^{(1\div k)})\setminus \mathfrak{S}(\mathscr{H}^{(1\div k)}).
\end{equation}
\end{definition}

In  this paper,we will focus our attention on the convex cone
generated by $\mathfrak{S}$. So we recall the following definition \cite{R96}:

\begin{definition}
A subset $C$ of a linear space $V$ is said a cone  if and only  if satisfies the
following condition:
\begin{equation}
x\in C  \Rightarrow px\in C, \forall p\geq 0.
\end{equation}
\end{definition}

Note that sometimes in the definition of a cone  the condition $\forall p\geq 0$ is replaced by the weaker $\forall p> 0$ \cite{R96}.

Important subspace of an Hilbert space $V$ are hyperplanes \cite{R96}:
 \begin{definition}
 Given an $n-$dimensional  Hilbert space $V$, the subspace $\mathcal{I}$ is said an  hyperplane if and only if $\mathrm{dim}\,\mathcal{I}=n-1$.
 \end{definition}

We recall briefly that an hyperplane $\mathcal{I}\subset V$ can be defined
by one of its, non zero, orthogonal vectors $a\in V$ \cite{R96}:
\begin{equation}
\mathcal{I}=\{x\in V|(x,a)=0\}.
\end{equation}
We will write $\mathcal{I}_a$ to denote the hyperplane defined by the vector $a$.

It is well known that  $\mathcal{I}_{a}$ defines the following  open half-spaces:
\begin{align}
\mathcal{I}^+_{a}&=\{x\in V: (x,a)>0\},\\
\mathcal{I}^-_{a}&=\{x\in V: (x,a)<0\},
\end{align}
and the corresponding closed half-spaces:
\begin{align}
\bar{\mathcal{I} }^+_{a}&=\{x\in V: (x,a)\geq0\},\\
\bar{ \mathcal{I}}^-_{a}&=\{x\in V: (x,a)\leq0\}.
\end{align}

We will use some simple results  obtained applying to hyper spaces and half-spaces some well known theorems of linear algebra. We will collect here these results to make the treatment more clear.

Let us consider $W\subseteq V$, a linear subspace of $V$. Let $\mathcal{I}_a$ be an hyper plane of $V$ determined by (one of) its orthogonal vector $a$. It is simple to prove, using  the projection theorem, that  the following lemmas hold:

\begin{lemma}\label{th.2.1}
The subspace $\mathcal{I}_a\cap W$ is determined by the following equality:
\begin{equation}
\mathcal{I}_a\cap W=\{x\in W: (x,a_W)=0\},
\end{equation}
where $a_W$ is the projection of $a$ on $W$.
\end{lemma}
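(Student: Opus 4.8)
The plan is to reduce the defining condition $(x,a)=0$ appearing in the intersection to a condition involving only the part of $a$ that lives inside $W$, invoking the projection theorem exactly as suggested in the text. By the definition of the hyperplane we have $\mathcal{I}_a\cap W=\{x\in W:(x,a)=0\}$, so the whole statement comes down to showing that, as $x$ ranges over $W$, the scalar $(x,a)$ coincides with $(x,a_W)$.

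First I would apply the projection theorem to the subspace $W\subseteq V$: since $V$ is a finite-dimensional Hilbert space it carries the orthogonal decomposition $V=W\oplus W^\perp$, so, writing $a_W$ for the orthogonal projection of $a$ onto $W$, the residual $a-a_W$ belongs to $W^\perp$. Next, for an arbitrary $x\in W$ I would expand, using linearity of the inner product in the second argument, $(x,a)=(x,a_W)+(x,a-a_W)$. Because $x\in W$ while $a-a_W\in W^\perp$, the second term vanishes by the very definition of the orthogonal complement, whence $(x,a)=(x,a_W)$ for every $x\in W$. Substituting this identity into the defining set yields $\mathcal{I}_a\cap W=\{x\in W:(x,a)=0\}=\{x\in W:(x,a_W)=0\}$, which is precisely the asserted equality.

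There is no genuine obstacle here; the single point deserving a moment's care is the bookkeeping of the two components, since $a_W$ itself need \emph{not} be orthogonal to $W$ (it lies inside $W$). It is the discarded component $a-a_W$, and not $a_W$, that is annihilated upon pairing with vectors of $W$, so one must keep the two roles distinct. As a sanity check it is worth noting the degenerate case $a\in W^\perp$: there $a_W=0$, the condition $(x,a_W)=0$ holds identically, and the lemma collapses to $\mathcal{I}_a\cap W=W$, in agreement with the obvious inclusion $W\subseteq\mathcal{I}_a$ in that situation.
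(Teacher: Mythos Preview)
Your proof is correct and follows exactly the approach indicated in the paper, which does not spell out a proof but simply remarks that the lemma follows from the projection theorem. Your decomposition $a=a_W+(a-a_W)$ with $a-a_W\in W^\perp$ and the consequent identity $(x,a)=(x,a_W)$ for $x\in W$ is precisely that argument made explicit.
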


\begin{lemma}\label{th.2.2}
The following equalities  hold:
\begin{align}
\mathcal{I}^+_{a}\cap W&=\{x\in W: (x,a_W)>0\},\\
\mathcal{I}^-_{a}\cap W&=\{x\in W: (x,a_W)<0\},
\end{align}
and
\begin{align}
\bar{\mathcal{I} }^+_{a}\cap W&=\{x\in W: (x,a_W)\geq0\},\\
\bar{ \mathcal{I}}^-_{a}\cap W&=\{x\in W: (x,a_W)\leq0\}.
\end{align}
\end{lemma}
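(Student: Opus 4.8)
The plan is to reduce all four set equalities to the single algebraic identity $(x,a)=(x,a_W)$, valid for every $x\in W$, which is exactly the computation already underlying Lemma \ref{th.2.1}. First I would invoke the projection theorem to write the orthogonal decomposition $a=a_W+a_{W^\perp}$, where $a_W$ is the orthogonal projection of $a$ onto $W$ and $a_{W^\perp}\in W^\perp$. For any $x\in W$, bilinearity of the inner product gives $(x,a)=(x,a_W)+(x,a_{W^\perp})$, and since $x\in W$ while $a_{W^\perp}\in W^\perp$, the second term vanishes; hence $(x,a)=(x,a_W)$ for all $x\in W$.

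With this identity in hand, each of the four claimed equalities follows by a routine double inclusion. Consider for instance $\mathcal{I}^+_{a}\cap W$. If $x$ lies in this intersection then $x\in W$ and $(x,a)>0$; substituting $(x,a)=(x,a_W)$ gives $(x,a_W)>0$, so $x$ belongs to the right-hand set. Conversely, if $x\in W$ satisfies $(x,a_W)>0$, the same identity yields $(x,a)>0$, so $x\in\mathcal{I}^+_{a}\cap W$. The three remaining cases $\mathcal{I}^-_{a}\cap W$, $\bar{\mathcal{I}}^+_{a}\cap W$ and $\bar{\mathcal{I}}^-_{a}\cap W$ are handled verbatim, merely replacing the strict inequality $>0$ by $<0$, $\geq 0$ and $\leq 0$ respectively.

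I do not expect any genuine obstacle here: the entire content is the orthogonality relation $(x,a)=(x,a_W)$ on $W$, and once it is established the four statements are immediate. The only point requiring a small amount of care is to observe that the same decomposition simultaneously controls the open half-spaces, the closed half-spaces, and (via Lemma \ref{th.2.1}) the bounding hyperplane itself, so that no separate argument is needed for the boundary case $(x,a)=0$.
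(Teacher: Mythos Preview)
Your proposal is correct and matches the paper's approach exactly: the paper does not spell out a proof but simply states that the lemma follows from the projection theorem, and your argument is precisely the standard unpacking of that remark via the decomposition $a=a_W+a_{W^\perp}$ and the identity $(x,a)=(x,a_W)$ for $x\in W$.
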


As it was noted, an hyperplane $\mathcal{I}_a$  can be identified by  its  orthogonal monodimensional subspace (generated by $a$).
In particular, in the case  of $\mathcal{H}(\mathscr{H}^{(1\div k)})$, the vector $a$ is an observable that, under suitable conditions,  can detect entangled state:

\begin{definition}
If $(a,p)\geq0$ for all $p\in\mathfrak{S}^{(1\div k)}$ and at least one $q\in\mathfrak{E}^{(1\div k)}$  exists such that $(a,q)<0$ then $a$ is called an Entanglement Witness (EW).
\end{definition}

\section{Some geometrical considerations}\label{sec.2}

To expose the main  idea of the paper, let us start considering the state space $\mathscr{H}^2$ of  a single qubit. It is well known that, in the  Bloch sphere representation, states can be represented by point inside a  sphere in the $3-$dimensional real space generated by the Pauli matrices, $\sigma_1$, $\sigma_2$ and $\sigma_3$
where
\begin{equation}\label{eq.pm}
\sigma_1=\left(
           \begin{array}{cc}
             0 & 1 \\
             1 & 0 \\
           \end{array}
         \right)
,\quad\sigma_2=\left(
                \begin{array}{cc}
                  0 & -i \\
                  i & 0\\
                \end{array}
              \right)
,\quad\sigma_3=\left(
                 \begin{array}{cc}
                   1 & 0 \\
                   0 & -1 \\
                 \end{array}
               \right).
\end{equation}
 In fact it is simple to prove that every trace one Hermitian matrices acting on $\mathds{C}^2$ can be written as:
\begin{equation}\label{eqb}
h= \frac{1}{2}\mathds{I}+\frac{1}{2}(h_1\sigma_1+h_2\sigma_2+h_3\sigma_3)= \frac{1}{2}\mathds{I}+\frac{1}{2}\vec{h}\cdot\vec{\sigma}.
\end{equation}

It is possible to show that if $\|\vec{h}\| = \sqrt{ h_1^2 + h_2^2+ h_3^2}\leq 1$, (\ref{eqb}) is a positive matrix  and,  in particular, (\ref{eqb}) is a pure state if and only if $\|\vec{h}\| = 1$.

Let us find the tangent plane $\pi$ to the Bloch sphere through one  pure state, for example $p=\frac{1}{2}\mathds{I}+\frac{1}{2}\sigma_z$. It is immediate to note that the elements of $\pi$ has the following form:
\begin{equation} \label{eq3}
\sigma_3+\alpha\sigma_1+\beta\sigma_2,\quad \alpha,\beta\in\mathds{R},
\end{equation}
(see figure \ref{fig.1}).
%\begin{figure}
%    \includegraphics[width=8cm]{BS}\\
% \caption{Tangent plane to the Bloch sphere in the point $(0\:  0\:  1)$.}\label{fig.1}
%\end{figure}

\begin{figure}\label{fig.1}
\centering
 \psset{unit=1}
 \begin{pspicture}(-4,-2)(2,4)
 %\psgrid
\pstThreeDSphere[linecolor=black!60,SegmentColor={[cmyk]{0,0,0,0}}](0,0,0){2}
\pstThreeDLine[linewidth=1pt,linecolor=black,linestyle=dashed](0,0,0)(0,0,2)
\pstThreeDLine[linewidth=1pt,linecolor=black,linestyle=dashed](0,0,0)(0,2,0)
\pstThreeDLine[linewidth=1pt,linecolor=black,linestyle=dashed](0,0,0)(2,0,0)
{\psset{fillstyle=crosshatch,hatchwidth=0.1\pslinewidth,hatchsep=1\pslinewidth,hatchcolor=black!30}
\pstThreeDSquare(-2,-2,2)(4,0,0)(0,4,0)}
\pstThreeDCoor[xMin=2,xMax=4,yMin=2,yMax=4,
zMin=2,zMax=4,nameX=$\sigma_1$,nameY=$\sigma_2$,nameZ=$\sigma_3$,linecolor=black,linewidth=1pt]
\pstThreeDLine[linewidth=1pt,linecolor=black,arrows=->]%
(0,0,2)(1.5,0,2)
\pstThreeDLine[linewidth=1pt,linecolor=black,arrows=->]%
(0,0,2)(0,1.5,2)
\end{pspicture}\label{fig.1}
\end{figure}

The bidimensional subspace $C$ associated to the affine plane  (\ref{eq3}) is naturally $C=\mathrm{span}\{\sigma_1,\sigma_2\}$.

It is now immediate to note that  the hyperplane tangent to  $\mathcal{S}(\mathscr{H}^2)$  is the subspace:
\begin{equation}\label{eq.p1}
 \Pi=\mathrm{span}\{p,\sigma_1,\sigma_2\}.
\end{equation}

\begin{remark}
 By simple calculations, we see that the hyperplane $\Pi$ through the point $p$ can be obtained imposing that $\Pi$ contains the space  $C(p)$, where:
\begin{equation} \label{eqC}
C(p)=\{i[p,h]:h=h^\dag\}.
\end{equation}
\end{remark}
\begin{remark}
We note that the only vectors orthogonal to $ \pi$ must necessarily have the form
\begin{equation}
q=\alpha(\frac{1}{2}\mathds{I}-\frac{1}{2}\sigma_z).
\end{equation}
Note that $q$ is a positive operator if $\alpha\geq0$ while it is a negative operator if $\alpha\leq0$.
\end{remark}
In the next sections, we will prove that these remarks can be suitably generalized in the multipartite case.

\section{Main result (necessary condition)}\label{sec.4}

To simplify the notation we will introduce the following linear spaces
\begin{align}
\tau^{(i)}=\mathrm{span}&\{\mathds{I}^{(1\div(i-1))}\otimes h^{(i)}\otimes\mathds{I}^{((i+1)\div k)}:\\
&h^{(i)}\in\mathcal{H}(\mathscr{H}^{(i)}),\mathrm{Tr}h^{(i)}=0\}
\end{align}
where $\mathds{I}^{(i\div j)}$  is the identity operator on $\mathscr{H}^{(i\div j)}$ , obviously $\mathds{I}^{(i\div j)}=\bigotimes_{l=i}^j\mathds{I}^{(l)}$ where $\mathds{I}^{(l)}$  is the identity operator on $\mathscr{H}^{(l)}$.
Let us introduce the following linear space:
\begin{equation}
\tau=\mathrm{span}\left(\bigcup_{i=1}^k\tau^{(i)}\right).
\end{equation}
 Noticing that for all $t^{(i)}\in\tau^{(i)}$ and $t^{(j)}\in\tau^{(j)}$, $i\neq j$, $(t^{(i)},t^{(j)})=0$ we can write
 \begin{equation}
\tau =\bigoplus_{i=1}^k\tau^{(i)}.
\end{equation}
\begin{remark}
Note that $\tau$ is the set of the generators of local unitary transformations, i.e. the set of all unitary transformations $U$ on $\mathscr{H}^{(1\div k)}$ of the form
\begin{equation}
U^{(1\div k)}=\bigotimes_{i=i}^kU^{(i)},
\end{equation}
where $\forall i=1,\ldots,k$, $U^{(i)}$ is a unitary transformation on $\mathscr{H}^{(i)}$.
\end{remark}
The following set generalizes (\ref{eqC}).
\begin{definition}
Given an element $p\in\mathcal{H}(\mathscr{H}^{(1\div k)})$, we can define the following subspace:
\begin{equation}
\mathfrak{C}(p)=\{i[p,t]| t\in\tau\}.
\end{equation}
\end{definition}

If $p$ is a pure product states, we will call  $\mathfrak{C}(p)$ \emph{separable tangent space at the point $p$}.

Let us consider $k$ finite Hilbert space $\mathscr{H}^{(i)}$, $i=1,\ldots,k$ ($\mathrm{dim}\mathscr{H}^{(i)}=n$, $\forall i$).

Let us now  prove the following theorem:
\begin{theorem} \label{mth}
Let us consider  a vector $a\in\mathcal{H}(\mathscr{H}^{(1\div k)})$ and its corresponding hyperplane
$\mathcal{I}_a$. If  $a$ is an entanglement witness then  the following condition holds:
\begin{equation} \label{eq.nsc}
\mbox{if}\,\, p\in\mathfrak{P}^{(1\div k)}\cap\mathcal{I}_a \,\,\mbox{then}\,\, \mathfrak{C}(p)\subseteq\mathcal{I}_a.
\end{equation}
\end{theorem}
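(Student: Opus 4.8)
The plan is to exploit the defining property of an entanglement witness: $a$ is nonnegative on the whole set of separable states, so that any pure product state $p$ lying on $\mathcal{I}_a$ is in fact a \emph{minimizer} of the linear functional $\varrho\mapsto(a,\varrho)$ over $\mathfrak{S}^{(1\div k)}$, attaining there the minimum value $0$. The inclusion $\mathfrak{C}(p)\subseteq\mathcal{I}_a$ will then emerge as a first-order stationarity condition at this minimum, taken along the orbit of $p$ under local unitaries.

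First I would fix $p\in\mathfrak{P}^{(1\div k)}\cap\mathcal{I}_a$, so that $(a,p)=0$ while $(a,p')\geq 0$ for every $p'\in\mathfrak{P}^{(1\div k)}$, since pure product states are separable. Next I would use that for any $t\in\tau$ the operator $U(s)=e^{-ist}$ is, for each $s\in\mathds{R}$, a local unitary of the form $\bigotimes_{i=1}^k U^{(i)}(s)$: this is the content of the Remark identifying $\tau$ with the generators of local unitaries, and it follows from the fact that summands of $t$ coming from distinct $\tau^{(i)}$ act on different tensor factors, hence commute, so that the exponential factorizes. Because $t$ is Hermitian, $U(s)$ is unitary, and because local unitaries preserve both purity and the product structure, $U(s)\,p\,U(s)^{\dag}\in\mathfrak{P}^{(1\div k)}$ for all $s$.

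Then I would set $g(s)=(a,U(s)\,p\,U(s)^{\dag})=\mathrm{Tr}\!\left(a\,e^{-ist}\,p\,e^{ist}\right)$. By the previous step $g(s)\geq 0$ for all $s$, while $g(0)=(a,p)=0$, so $g$ has a global minimum at $s=0$; since $g$ is smooth, $g'(0)=0$. Differentiating under the trace and using $\tfrac{d}{ds}\big|_{s=0}e^{-ist}\,p\,e^{ist}=i[p,t]$ yields $g'(0)=(a,i[p,t])=0$. As $i[p,t]$ is Hermitian whenever $p$ and $t$ are, this says precisely that $i[p,t]\in\mathcal{I}_a$, and since $t\in\tau$ was arbitrary I would conclude $\mathfrak{C}(p)\subseteq\mathcal{I}_a$.

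I expect the only genuinely delicate point to be the justification that $e^{-ist}$ is a \emph{local} unitary, so that the curve $s\mapsto U(s)\,p\,U(s)^{\dag}$ never leaves $\mathfrak{P}^{(1\div k)}$; once this is secured the rest is just the vanishing of the derivative of a smooth nonnegative function at an interior minimum of its one-parameter restriction. It is worth noting that the argument uses only that $p$ is minimal along each such smooth curve, so no convexity or boundary analysis of $\mathfrak{S}^{(1\div k)}$ is required, and that the computation reproduces exactly the commutator form appearing in the definition of $\mathfrak{C}(p)$.
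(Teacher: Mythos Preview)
Your argument is correct. The key identification of $\tau$ with generators of local unitaries is exactly what the paper records in its Remark, and once one has that, the curve $s\mapsto e^{-ist}\,p\,e^{ist}$ indeed stays in $\mathfrak{P}^{(1\div k)}$; the derivative computation and the first-order optimality condition are then routine.

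Your route, however, is genuinely different from the paper's. The paper does not use a variational argument at all: instead it fixes a product basis adapted to $p=\bigotimes_i|e_1^{(i)}\rangle\langle e_1^{(i)}|$, exhibits an explicit spanning set of $\mathfrak{C}(p)$ built from generalized Pauli blocks $\sigma_1(1m),\sigma_2(1m)$ placed in one tensor slot, and then, for each such block, restricts to the two-dimensional subspace $\mathcal{W}^{(l)}_{1m}$ spanned by $|e_1^{(l)}\rangle,|e_m^{(l)}\rangle$ (times the fixed product vector in the remaining slots). Using Lemmas~\ref{th.2.1} and~\ref{th.2.2}, positivity of $a$ on the Bloch ball $\mathcal{S}(\mathcal{W}^{(l)}_{1m})$ together with $(a,p)=0$ forces the projected witness $a_{\mathcal{W}}$ to be a nonnegative multiple of the antipodal rank-one projector, whence its inner products with the off-diagonal $\sigma_1,\sigma_2$ vanish. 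What you gain is brevity and a coordinate-free argument that makes transparent why $\mathfrak{C}(p)$ deserves the name ``separable tangent space'': it is literally the tangent space to the local-unitary orbit through $p$. What the paper's approach buys is the explicit Bloch-sphere picture emphasized throughout (the hyperplane $\mathcal{I}_a$ is tangent to every embedded qubit sphere through $p$), together with concrete structural information about $a_{\mathcal{W}}$ that is reused later in Section~\ref{sec.5}.
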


\begin{proof}
Let us consider  a vector $a\in\mathcal{H}(\mathscr{H}^{(1\div k)})$ and its corresponding hyper plane
$\mathcal{I}_a$.

If $\mathfrak{P}\cap\mathcal{I}_a=\emptyset$ we have nothing to prove, so let us suppose that $\mathfrak{P}\cap\mathcal{I}_a\neq\emptyset$ i.e., a  pure product state $p$ exists  that lies in $\mathcal{I}_a$.
In particular let us put:
\begin{equation}
p=\bigotimes_{i=1}^k|e_1^{(i)}\rangle\langle e_1^{(i)}|,
\end{equation}
where $\forall i=1,\ldots,k$ $|e_1^{(i)}\rangle\in \mathscr{H}^{(i)}$. Naturally we can choose  orthonormal bases $\mathcal{B}^{(i)}$ in $\mathscr{H}^{(i)}$  such that $|e_1^{(i)}\rangle \in\mathcal{B}^{(i)}$, in particular let us put $\mathcal{B}^{(i)}=\{|e_j^{(i)}\rangle\}_{j=1,\ldots,n}$.
Using this representation, $p$ is represented by an $\underbrace{n^2\times \cdots\times n^2}_{k-\mbox{times}}$ hermitian matrix:
\begin{equation}
p= \bigotimes_{i=1}^k p_1^{(i)},
\end{equation}
where $p^{(i)}_{lm}=\delta_{1l}\delta_{1m}$.

Let us first introduce the following trace less hermitian matrices: $\{\sigma_1(lm),\sigma_2(lm),\sigma_3(lm)\}_{l<m=1,\ldots,n}$
whose non zero elements are so defined:
\begin{align}
({\sigma_1(lm)})_{ij}&=\delta_{li}\delta_{mj}+\delta_{lj}\delta_{mi},\\
{(\sigma_2(lm))}_{ij}&=i\delta_{li}\delta_{mj}-i\delta_{lj}\delta_{mi},\\
{(\sigma_3(lm))}_{ij}&=\delta_{li}\delta_{lj}-\delta_{mj}\delta_{mi}.
\end{align}

The notation emphases the fact that the matrices introduced generalize the Pauli matrices (\ref{eq.pm}).

Note that
\begin{equation}
\{\sigma_1(lm),\sigma_2(lm),\sigma_3(lm)\}_{l<m=1,\ldots,n},
\end{equation}
 is a spanning set for the set of all Hermitian trace less matrices.

Using this representation, we note that
\begin{equation}
\bigotimes_{i=1}^{j-1}|e_1^{(i)}\rangle\langle e_1^{(i)}|\otimes{\sigma_1(1m)}\otimes\bigotimes_{i=j+1}^{k}|e_1^{(i)}\rangle\langle e_1^{(i)}|\in\mathfrak{C}(p),\label{eq.42}
\end{equation}
and
\begin{equation}
 \bigotimes_{i=1}^{j-1}|e_1^{(i)}\rangle\langle e_1^{(i)}|\otimes{\sigma_2(1m)}\otimes\bigotimes_{i=j+1}^{k}|e_1^{(i)}\rangle\langle e_1^{(i)}|\in\mathfrak{C}(p)\label{eq.43}.
\end{equation}
 for all $m=2,\ldots,n$ and $j=1,\ldots,k-1$.
Moreover, it is trivial to observe that  the previous operators are a spanning set for $\mathfrak{C}(p)$.
Let us now prove that $\mathfrak{C}(p)$ is a subspace of $\mathcal{I}_a$, in particular it is sufficient  to prove that the elements (\ref{eq.42}) and (\ref{eq.43}) belong to $\mathcal{I}_a$. We start proving  that ${\sigma_1(12)}\otimes\bigotimes_{i=2}^{k}|e_1^{(i)}\rangle\langle e_1^{(i)}|\in \mathcal{I}_a$ and ${\sigma_2(12)}\otimes\bigotimes_{i=2}^{k}|e_1^{(i)}\rangle\langle e_1^{(i)}|\in \mathcal{I}_a$.

Let us consider  the subspaces
\begin{align}
W_{12}^{(1)}&=\mathrm{span}\left\{|e_1^{(1)}\rangle,|e_2^{(1)}\rangle\right\},\\
\mathcal{W}_{12}^{(1)}&=\mathrm{span}\left\{|e_1^{(1)}\rangle\otimes\bigotimes_{i=2}^{k}
|e_1^{(i)}\rangle,|e_2^{(1)}\rangle\otimes\bigotimes_{i=2}^{k}|e_1^{(i)}\rangle \right\},
\end{align}
and the corresponding spaces $\mathcal{H}({W}_{12}^{(1)})$ and $\mathcal{H}(\mathcal{W}_{12}^{(1)})$.
Using lemmas \ref{th.2.1} and \ref{th.2.2}
applied to the space $\mathcal{W}_{12}^{(1)}$, we obtain that
\begin{equation}
\mathcal{W}_{12}^{(1)}\cap \bar{\mathcal{I}}_a^+=\{x\in \mathcal{W}_{12}^{(1)}: (x,a_\mathcal{W})\geq0\},
\end{equation}
where $a_\mathcal{W}$ is the projection of $a$ on the subspace $\mathcal{H}(\mathcal{W}_{12}^{(1)})$.
Now we recall that $\mathfrak{S}^{(1\div k)}\subset  \bar{\mathcal{I}}_a^+$. Immediately we have that
$\mathcal{H}(\mathcal{W}_{12}^{(1)})\cap\mathfrak{S}^{(1\div k)}\subset  \mathcal{H}(\mathcal{W}_{12}^{(1)})\cap\bar{\mathcal{I}}_a^+$.

So we have that
$(x,a_\mathcal{W})\geq 0, \forall x\in\mathcal{H}(\mathcal{W}_{12}^{(1)})\cap\mathfrak{S}^{(1\div k)}$.
It is simple to prove that
\begin{equation}
\mathcal{H}(\mathcal{W}_{12}^{(1)})\cap\mathfrak{S}^{(1\div k)}=	\left\{p^{(1)}\otimes\bigotimes_{i=2}^{k}|e_1^{(i)}\rangle\langle e_1^{(i)}| , p^{(1)}\in\mathcal{S}(W_{12}^{(1)})\right\}.
\end{equation}

This simple remark implies that $a_\mathcal{W}$ must assume the form
$|e_2^{(1)}\rangle\langle e_2^{(1)}|\otimes\bigotimes_{i=2}^{k}|e_1^{(i)}\rangle\langle e_1^{(i)}|$ or $0$.
In both cases it is simple to prove that

\begin{align}
&\left(\tilde{\sigma}_1{(12)},a_\mathcal{W}\right)=0,\\
&\left(\tilde{\sigma}_2{(12)},a_\mathcal{W}\right)=0,
\end{align}
where
\begin{align}
\tilde{\sigma}_1{(12)}=&\sigma_1{(12)}\otimes\bigotimes_{i=2}^{k}|e_1^{(i)}\rangle\langle e_1^{(i)}|,\\
\tilde{\sigma}_2{(12)}=&\sigma_2{(12)}\otimes\bigotimes_{i=2}^{k}|e_1^{(i)}\rangle\langle e_1^{(i)}|.
\end{align}
Repeating the same reasoning for the different space
$\mathcal{W}_{1m}^{(l)}$, $l=2,\ldots, k$ and $m=2,\ldots,n$,
we can complete the proof.
\end{proof}

Theorem \ref{mth} has an immediate geometrical interpretation. It is evident, following the proof of the theorem, that equation (\ref{eq.nsc}) states that if an Entanglement Witness $a$ is othogonal to some pure product  state
$p=\bigotimes_{i=1}^k|e_1^{(i)}\rangle\langle e_1^{(i)}|\in \mathcal{I}_a$ then $\mathcal{I}_a$ is tangent to all possible Bloch Sphere $\mathcal{S}(\mathcal{W}^{(l)}_{12})$
where
\begin{equation}
 \mathcal{W}^{(l)}_{12}=\mathrm{span}\left\{|\psi_1^l\rangle,|\psi_2^l\rangle|\right\}
\end{equation}
with
\begin{align}
|\psi_1^l\rangle&=\bigotimes_{j=1}^{l-1}|e^{(j)}_1\rangle\otimes|e^{(l)}_1\rangle\otimes\bigotimes_{j=l+1}^k|e^{(j)}_1\rangle\\
|\psi_2^l\rangle&=\bigotimes_{j=1}^{l-1}|e^{(j)}_1\rangle\otimes|e^{(l)}_2\rangle\otimes\bigotimes_{j=l+1}^k|e^{(j)}_1\rangle
\end{align}
and
$|e^{(l)}_1\rangle\langle e^{(l)}_1|\neq|e^{(l)}_2\rangle\langle e^{(l)}_2|$.

\section{Main result (sufficient condition)}\label{sec.5}

In the previous section we proved that if a hyperplane, which defines an entanglement witness,  contains a pure product states $p$, then necessarily must contain the separable tangent spaces $\mathfrak{C}(p)$.

In this section we will prove that this condition is not only necessary, but also sufficient.

\begin{theorem}\label{thsc}
Let us consider  a vector $a\in\mathcal{H}(\mathscr{H}^{(1\div k)})$   and its corresponding hyperplane $\mathcal{I}_a$.
If  $\mathcal{I}_a$ satisfies condition (\ref{eq.nsc})
then $a$  satisfies the following equation:
\begin{equation}\label{eq.nsc2}
 (p_1,a)(p_2,a)\geq0
\end{equation}\label{eq.sp}
for all $p_1,p_2\in\mathfrak{P}^{(1\div k)}$.
\end{theorem}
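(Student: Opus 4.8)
The plan is to recast the conclusion (\ref{eq.nsc2}) as the statement that the real functional $f(p):=(p,a)$ does not change sign on the set $\mathfrak{P}^{(1\div k)}$ of pure product states; indeed $(p_1,a)(p_2,a)\geq0$ for all $p_1,p_2$ holds if and only if $f$ is everywhere $\geq0$ or everywhere $\leq0$ on $\mathfrak{P}^{(1\div k)}$. Two structural facts will be used throughout. First, by the Remark identifying $\tau$ with the generators of local unitaries, $\mathfrak{P}^{(1\div k)}$ is a single orbit of the (connected) local unitary group, hence a connected manifold, and $\mathfrak{C}(p)$ is exactly its tangent space at $p$. Second, since for $v\in\mathfrak{C}(p)$ the directional derivative of $f$ is $(v,a)$, hypothesis (\ref{eq.nsc}) says precisely that \emph{every zero of $f$ on $\mathfrak{P}^{(1\div k)}$ is a critical point} of $f$.

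The heart of the argument is a local computation on the Bloch spheres $\mathcal{S}(\mathcal{W}^{(l)}_{12})$ appearing in the geometric interpretation of Theorem \ref{mth}. Fixing all but one tensor factor and letting the remaining factor range over a two dimensional subspace produces such a sphere, on which $f$ restricts to an affine function $f=c+\vec b\cdot\vec r$ of the Bloch vector $\vec r$, the multipartite analogue of (\ref{eqb}). Its tangent directions are generated by the $\sigma_1(1m),\sigma_2(1m)$ operators and therefore lie in $\mathfrak{C}(p)$. Consequently, at any zero $p_0$ of $f$ on the sphere, (\ref{eq.nsc}) forces the along sphere gradient to vanish, i.e. $\vec b\parallel\vec r_0$; writing $\lambda=\vec b\cdot\vec r_0$ one finds
\begin{equation}
f(\vec r)=\lambda\,(\vec r_0\cdot\vec r-1),
\end{equation}
which by Cauchy--Schwarz has constant sign on the whole sphere. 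Hence $f$ is semidefinite on every single factor Bloch sphere; in particular, whenever $p_1$ and $p_2$ differ in a single factor they lie on a common sphere and $(p_1,a)(p_2,a)\geq0$.

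To reach arbitrary $p_1=\bigotimes_i|u_i\rangle\langle u_i|$ and $p_2=\bigotimes_i|v_i\rangle\langle v_i|$ I would join them by the chain $q_0=p_1,\dots,q_k=p_2$ obtained by replacing $|u_i\rangle$ with $|v_i\rangle$ one factor at a time. Each consecutive pair $q_{j-1},q_j$ differs in a single factor, so by the previous paragraph $f(q_{j-1})f(q_j)\geq0$. Combining this with the connectedness of $\mathfrak{P}^{(1\div k)}$ is what should yield the global statement $f(q_0)f(q_k)\geq0$, that is (\ref{eq.nsc2}).

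The step I expect to be the main obstacle is exactly this globalization, because ``same sign on each link'' is not transitive across links on which $f$ vanishes: a global sign change could a priori be concentrated at a zero $p_0$ where the positive region is reached by moving in one factor and the negative region by moving in another (a ``saddle'' zero), and such a point is not excluded by the single sphere lemma alone. To close the gap I would exploit the full strength of (\ref{eq.nsc}): applying it at $p_0$ and at the neighbouring product zeros shows that every isotropic vector of each single factor reduced operator $A_l$ of $a$ at $p_0$ lies in its kernel, whence each $A_l$ is positive or negative semidefinite. It then remains to prove, using continuity and the connectedness of $\mathfrak{P}^{(1\div k)}$, that this semidefinite sign cannot change from factor to factor, so that the first sign change along any connecting path is forced to occur within a single Bloch sphere, contradicting the local lemma. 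This propagation of the sign past the (entirely critical) zero set of $f$ is the delicate point of the proof.
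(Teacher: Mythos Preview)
Your local Bloch-sphere computation is correct and coincides with the paper's base case $k=1$: an affine function on the sphere with a critical zero is semidefinite there. Your chain argument (change one factor at a time) is essentially the paper's Case~1 of the inductive step. But the gap you yourself flag is real, and your proposed fix does not close it. The abstract principle ``every zero of a smooth $f$ on a connected manifold is critical $\Rightarrow$ $f$ has constant sign'' is false (take $f(\theta)=\sin^3\theta$ on $S^1$), so one must use the multilinear structure of $f$ beyond semidefiniteness of the one-factor reductions $A_l$. In the very saddle configuration you describe, the reductions $A_l$ at the intermediate zero $q_j$ are already semidefinite \emph{with opposite signs} for the two adjacent factors, so ``continuity plus connectedness'' gives nothing further at that point; what has to be ruled out is precisely this coexistence.

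The paper resolves this not by a connectedness argument but by induction on $k$ together with a \emph{zero-propagation} lemma (Remark~\ref{rem.5.2}): if two pure product zeros agree in all factors but one, then every product state with that remaining factor ranging over their two-dimensional span is also a zero. Concretely, when all factors of $p_1,p_2$ differ, your chain step (equivalently, the inductive hypothesis) first forces the ``crossed'' states $p_3,p_4$ to be zeros. One then inserts a \emph{generic} combination $|g^{(1)}\rangle=\alpha_1|e_1^{(1)}\rangle+\alpha_2|e_2^{(1)}\rangle$ in the first slot to form $p_5$ (with the tail of $p_1$) and $p_6$ (with the tail of $p_2$); a short case analysis using the inductive hypothesis forces one of $(p_5,a),(p_6,a)$ to vanish. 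Remark~\ref{rem.5.2} applied to the pair $\{p_4,p_5\}$ (resp.\ $\{p_3,p_6\}$), together with $|e_1^{(1)}\rangle\in\mathrm{span}\{|e_2^{(1)}\rangle,|g^{(1)}\rangle\}$, then yields $(p_1,a)=0$ (resp.\ $(p_2,a)=0$), the desired contradiction. This propagation device---not present in your outline---is the missing idea.
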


Note that equation (\ref{eq.sp}) implies that if exists a pure product state $p_1(p_2)\in\mathfrak{P}^{(1\div k)}$ such that $(p_1,a)<0$ ($(p_2,a)>0$) then
$(p,a)\leq0$ (respectively $(p,a)\geq0$) for all $p\in\mathfrak{P}^{(1\div k)}$.

\begin{proof}

We will show inductively on the number of subsystems $k$ that theorem \ref{thsc} holds.

In particular we will  prove that
\begin{enumerate}
  \item theorem \ref{thsc} holds for $k=1$;
  \item if theorem \ref{thsc} holds for  $k-1$ then it holds for $k$.
\end{enumerate}

\textbf{Step 1} ($k=1$).
Note that in this case, obviously, $\mathfrak{P}=\mathcal{S}(\mathscr{H})$.

Let us suppose that  $\mathcal{I}_a$ satisfies condition (\ref{eq.nsc})
but  not condition (\ref{eq.nsc2}).
This means that, at least, a couple of  pure states, for example
$p_1=|e_1\rangle\langle e_1|$ and $p_2= |e_2\rangle\langle e_2|$, exist such that
\begin{equation}\label{eq.50}
   (p_1,a)<0,\quad  (p_2,a)>0.
\end{equation}
Without loss of generality, see lemmas \ref{th.2.1} and \ref{th.2.2}, we can restrict our attention to the bidimensional space $W$ generated by $|e_1\rangle$ and $|e_2\rangle$.
It is evident that $a_\mathcal{W}$, i.e. the projection of $a$ on the space $\mathcal{W}=\mathcal{H}(W)$, cannot be a definite operator, so we can choose a representation such that
\begin{equation}
a_\mathcal{W}=\alpha\frac{1}{2} (\mathds{I}+\beta\sigma_3),
\end{equation}
where $\alpha,\beta\in\mathds{R}$ and $|\beta|>1$, $\alpha\neq0$.

It is simple to show that
\begin{align}
p_3&=\frac{1}{2} \mathds{I}+\frac{1}{2}\delta \sigma_1-\frac{1}{2\beta}\sigma_3\in\mathcal{I}_a\cap\mathfrak{P},\\
p_4&=\frac{1}{2} \mathds{I}+\frac{1}{2}\delta \sigma_2-\frac{1}{2\beta}\sigma_3\in\mathcal{I}_a\cap\mathfrak{P},
\end{align}
where $\delta=\sqrt{1-\beta^{-2}}$.

Using the hypothesis (\ref{eq.nsc}),
we obtain immediately that $\{\mathds{I}, \sigma_1, \sigma_2, \sigma_3\}\subset\mathcal{I}_a $. It is evident that this last implies
\begin{equation}
   (p_1,a)=0\: \mbox{and} \: (p_2,a)=0,
\end{equation}
against the hypothesis (\ref{eq.50}), and so theorem \ref{thsc} holds for $k=1$.
\begin{remark}
Note that for $k=1$, if exists a pure states $p\in \mathcal{S}(\mathscr{H})$ such that  $(p,a)>0$ then   equation (\ref{eq.nsc}) represents a necessary and sufficient positivity condition for $a$. This supports the common idea that Entanglement Witness generalizes the concept  of positive operator.
\end{remark}

\begin{remark}\label{rem.5.2}
By direct calculations, it is possible to show, in the case of $k=1$,  that if $\mathcal{I}_a$ satisfies (\ref{eq.nsc}) and  contains the projectors associated to two  linearly  independent vectors $|e_1\rangle,|e_2\rangle\in\mathscr{H}$,  then, for all   linear combinations $|g\rangle=\alpha_1|e_1\rangle+\alpha_2|e_2\rangle$, $\alpha_1,\alpha_2\in \mathds{C}$,  $\mathcal{I}_a$ contains the projectors
\begin{equation}
g=|g\rangle\langle g|.
\end{equation}
Obviously, this results can be generalized to multipartite systems in the following way.
If  $\mathcal{I}_a$ satisfies (\ref{eq.nsc}) and  contains the projectors associated to the vectors $|e_1^{(1)}\rangle\otimes|f\rangle$ and $|e_2^{(1)}\rangle\otimes|f\rangle$, where $|f\rangle$ is a product vector in $\mathscr{H}^{(2\div k)}$, and
  $|e_1\rangle,|e_2\rangle\in\mathscr{H}^{(1)}$ are linearly independent,  then, for all   linear combinations $|g\rangle=\alpha_1|e_1\rangle+\alpha_2|e_2\rangle$, $\alpha_1,\alpha_2\in \mathds{C}$,  $\mathcal{I}_a$ contains the projectors
\begin{equation}
g'=|g\rangle\langle g|\otimes|f\rangle\langle f|.
\end{equation}
\end{remark}

\textbf{Step 2.}

Now suppose that  theorem \ref{thsc} holds for $k-1$ component systems. We will show now that it holds for $k$ component systems.
In fact let us now suppose that an hyperplane $\mathcal{I}_a$ exists  such that
\begin{equation}
\forall \,\, p\in\mathfrak{P}^{(1\div k)}\cap\mathcal{I}_a \,\,\Rightarrow\,\, \mathfrak{C}(p)\subseteq\mathcal{I}_a
\end{equation}
but $a$ does not satisfie condition (\ref{eq.nsc2}).

This means that at least two pure product states, for example  $p_1$ and $p_2$, exist such that
\begin{align}
p_1=&\bigotimes_{i=1}^{k}|e_1^{(i)}\rangle\langle e_1^{(i)}|\\
p_2=&\bigotimes_{i=1}^{k}|e_2^{(i)}\rangle\langle e_2^{(i)}|
\end{align}
such that
\begin{equation} \label{h1}
(a,p_1)<0;\,\,\,(a,p_2)>0.
\end{equation}

Let us consider the spaces
\begin{equation}
W_{12}^{(i)}=\mathrm{span}\{|e_1^{(i)}\rangle,|e_2^{(i)}\rangle\}.
\end{equation}

Now two cases arise:

Case 1)
\begin{equation} \label{eq.1}
\mathrm{dim}W_{12}^{(i)}=1,
\end{equation}
 for at least one $i$;

Case 2)
\begin{equation} \label{eq.2}
\mathrm{dim}W_{12}^{(i)}=2
\end{equation}
 for all $i$.

In the first case, obviously, the situation can be reduced to the $k-1$ component systems case.
In fact, by equation (\ref{eq.1}) at least one $i$ exists, for example $i=1$, such that  $|e_1^{(1)}\rangle=|e_2^{(1)}\rangle$. In this case, let us consider the space
\begin{equation}
W^{(1\div k)}=\bigotimes_{i=1}^k W_{12}^{(i)}.
\end{equation}
The corresponding space $\mathcal{H}(W^{(1\div k)})$ contains only elements $h$ of the form
\begin{equation}
h=|e_1^{(1)}\rangle\langle e_1^{(1)}|\otimes h',
\end{equation}
where $h'\in \mathcal{H}(W^{(2\div k)})$, whit $W^{(2\div k)}=\bigotimes_{i=2}^k W_{12}^{(i)}$.
In particular, this simple remark  implies
that, given two elements $a_1,a_2\in \mathcal{H}(W^{(1\div k)})$,  we have
\begin{align}
a_1&=|e_1^{(1)}\rangle\langle e_1^{(1)}|\otimes a_1',\\
a_2&=|e_1^{(1)}\rangle\langle e_1^{(1)}|\otimes a_2',
\end{align}
and
\begin{equation}
(a_1,a_2)=(a_1',a_2'),
\end{equation}
where   the inner product on the lhs (rhs) of the previous equation must be intended in the  Hilbert space $\mathcal{H}(\mathscr{H}^{1\div k})$ (respectively, $\mathcal{H}(\mathscr{H}^{2\div k})$).
This last gives
\begin{equation} \label{h1}
(a'_W,p'_1)<0;\,\,\,(a'_W,p'_2)>0,
\end{equation}
where
 \begin{align}
p'_1=&\bigotimes_{i=2}^{k}|e_1^{(i)}\rangle\langle e_1^{(i)}|,\\
p'_2=&\bigotimes_{i=2}^{k}|e_2^{(i)}\rangle\langle e_2^{(i)}|,
\end{align}
and
\begin{equation}
   a_W=|e_1^{(1)}\rangle\langle e_1^{(1)}|\otimes a'_W.
\end{equation}
($a_W$ is the projection of $a$ on the space $\mathcal{H}(W^{(1\div k)})$).
Let us consider the $k-$partite system $W^{(2\div k)}$.
It is simple to verify that $\mathcal{I}_{a'_W}(\subseteq\mathcal{H}(W^{(2\div k)})$ satisfies the following condition:
\begin{equation}
\mbox{if}\,\, p\in\mathfrak{P}^{(2\div k)}\cap\mathcal{I}_{a'_W}\,\,\mbox{then}\,\, \mathfrak{C}(p)\cap \mathcal{H}(W^{(2\div k)}) \subseteq\mathcal{I}_{a'_W}.
\end{equation}
We note  that:
\begin{align}
&\mathfrak{P}^{(2\div k)}\cap\mathcal{I}_{a'_W}\subseteq\mathfrak{P}^{(2\div k)}\cap\mathcal{H}(W^{(2\div k)})=\\
&=\left\{p=\bigotimes_{j=2}^k|e^{(j)}\rangle\langle e^{(j)}| \mbox{for some}\,\,|e^{(j)}\rangle\in{W^{(j)}_{12}},\forall j=2,\ldots,k                                                                      \right\},
\end{align}
and
\begin{equation}
 \mathfrak{C}(p)\cap \mathcal{H}(W^{(2\div k)})=\{i[p,t]| t\in\tau^{(2\div k)}\cap\mathcal{H}(W^{(2\div k)})\}.
\end{equation}
Using  the inductive hypothesis, we obtain  immediately a contradiction.\\

Case 2).
 By equation (\ref{eq.2}) we have $|e_1^{(i)}\rangle\neq|e_2^{(i)}\rangle$ for all $i$.
In this case, the space $W^{(1\div k)}$ is a $2^k$ dimensional Hilbert space. Let us consider the following
two pure states in $\mathcal{H}(W)$:
\begin{align}
p_3&=|e_1^{(1)}\rangle\langle e_1^{(1)}|\otimes\bigotimes_{i=2}^k|e_2^{(i)}\rangle\langle e_2^{(i)}|,\\
p_4&=|e_2^{(1)}\rangle\langle e_2^{(1)}|\otimes\bigotimes_{i=2}^k|e_1^{(i)}\rangle\langle e_1^{(i)}|.
\end{align}

It is simple to show that
\begin{equation}
(p_3,a_W)=0\:\mbox{and}\: \quad (p_4,a_W)=0.
\end{equation}
In fact if $(p_3,a_W)>0$ $((p_3,a_W)<0)$, by a similar arguments of case 1) applied to the states $p_1$ and $p_3$ (respectively $p_2$ and $p_3$) we have immediately a contradiction. Similar reasoning holds for $p_4$.

Let us now consider the state $|g^{(1)} \rangle\langle g^{(1)}|$ where $|g^{(1)}\rangle \in\mathscr{H}^{(1)}$ is a non trivial linear combination of $|e_1^{(1)}\rangle$ and $|e_2^{(1)}\rangle$:
\begin{equation}
|g^{(1)} \rangle=\alpha_1|e_1^{(1)}\rangle+\alpha_2|e_2^{(1)}\rangle,
\end{equation}
 $\alpha_1,\alpha_2\neq0$, and the following  multipartite states
\begin{align}
p_5=&|g^{(1)} \rangle\langle g^{(1)}|\otimes\bigotimes_{i=2}^k|e_1^{(i)}\rangle\langle e_1^{(i)}|,\\
p_6=&|g^{(1)} \rangle\langle g^{(1)}|\otimes\bigotimes_{i=2}^k|e_2^{(i)}\rangle\langle e_2^{(i)}|.
\end{align}
By similar arguments of  case 1, we can exclude $(a,p_5)>0$ and $(a,p_6)<0$.
Now let us suppose that $(a,p_5)<0$ and $(a,p_6)>0$ hold simultaneously. We have an immediate contradiction using the same arguments of case 1.
So at least one of the following possibilities must hold:
\begin{enumerate}
  \item $(a,p_5)=0$;
  \item $(a,p_6)=0$.
\end{enumerate}
The first possibility, using the fact that  $(a,p_4)=0$ and the hypothesis (\ref{eq.nsc}), implies that $(a,p_1)=0$ (see remark \ref{rem.5.2}) against the hypothesis (\ref{eq.50}).
The second  case, using the fact that  $(a,p_3)=0$ and the hypothesis (\ref{eq.nsc}), implies that $(a,p_2)=0$ against the hypothesis (\ref{eq.50}).
So we have a contradiction and then we can conclude that theorem \ref{thsc} holds for $k$ component systems. This completes the proof.
\end{proof}

\section{Concluding remarks}\label{sec.6}

In the previous sections we give the necessary tools to formulate a necessary and sufficient condition for Entanglement Witness. In particular, using theorem \ref{mth} and  theorem \ref{thsc} we obtain immediately the following theorem.
\begin{theorem}\label{thfi}
Let us consider  a vector $a\in\mathcal{H}(\mathscr{H}^{(1\div k)})$   and its corresponding hyperplane $\mathcal{I}_a$.
Then $a$ is an Entanglement Witness if and only if the following conditions hold:
\begin{enumerate}
 \item $\exists p\in\mathfrak{P}^{(1\div k)}: (p,a)>0$;
\item $\mbox{if}\,\, p\in\mathfrak{P}^{(1\div k)}\cap\mathcal{I}_a \,\,\mbox{then}\,\, \mathfrak{C}(p)\subseteq\mathcal{I}_a$;
\item  $a\notin\mathcal{H}^+(\mathscr{H}^{(1\div k)})$.
\end{enumerate}
\end{theorem}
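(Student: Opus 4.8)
The plan is to read Theorem \ref{thfi} as the synthesis of the two main theorems and to establish the biconditional by proving each implication separately, after first unfolding the definition of an EW into its two constituent requirements: that $(a,p)\geq 0$ for every $p\in\mathfrak{S}^{(1\div k)}$, and that $(a,q)<0$ for at least one $q\in\mathfrak{E}^{(1\div k)}$. Since $\mathfrak{S}^{(1\div k)}$ is the convex hull of $\mathfrak{P}^{(1\div k)}$, the first requirement is equivalent to $(a,p)\geq 0$ for all $p\in\mathfrak{P}^{(1\div k)}$, and this reduction will be used throughout.

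For the necessity direction I would assume $a$ is an EW. Condition 2 is then nothing but the conclusion of Theorem \ref{mth}, so it holds at once. For condition 3, I would argue by contradiction: if $a\in\mathcal{H}^+(\mathscr{H}^{(1\div k)})$, then $(a,q)=\mathrm{Tr}(aq)\geq 0$ for every state $q$, the positivity of $a$ and $q$ forcing the trace to be nonnegative; this contradicts the existence of a detected entangled state. For condition 1, I would use that $a$ is nonnegative on $\mathfrak{P}^{(1\div k)}$: were $(a,p)=0$ for all such $p$, then, since the pure product states span $\mathcal{H}(\mathscr{H}^{(1\div k)})$ over the reals, $a$ would be orthogonal to the whole space and hence $a=0$, again contradicting detection; thus some $p$ must give $(a,p)>0$.

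For the sufficiency direction I would assume conditions 1--3. By condition 2 and Theorem \ref{thsc}, the inequality (\ref{eq.nsc2}) holds, so the map $p\mapsto(p,a)$ cannot take both strictly positive and strictly negative values on $\mathfrak{P}^{(1\div k)}$; condition 1 exhibits a product state where it is strictly positive, which forces $(p,a)\geq 0$ for every $p\in\mathfrak{P}^{(1\div k)}$, and hence, by convexity, $(a,\varrho)\geq 0$ for every $\varrho\in\mathfrak{S}^{(1\div k)}$. Finally, condition 3 provides a unit vector $|\psi\rangle$ with $\langle\psi|a|\psi\rangle<0$; the pure state $q=|\psi\rangle\langle\psi|$ then satisfies $(a,q)<0$, so $q$ cannot be separable and is therefore entangled. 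The two halves of the EW definition are thereby met.

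The bulk of the difficulty has already been absorbed into Theorems \ref{mth} and \ref{thsc}, so what remains is essentially logical glue and I do not expect a serious obstacle. The two points that must be handled with care are the spanning claim used for condition 1, namely that tensor products of rank-one projectors span the real space of Hermitian operators, and the sign bookkeeping in the sufficiency argument, where (\ref{eq.nsc2}) must be correctly read as the assertion that $(\cdot,a)$ is single-signed on $\mathfrak{P}^{(1\div k)}$ before condition 1 is invoked to pin that sign down as nonnegative.
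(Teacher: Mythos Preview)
Your proposal is correct and follows exactly the route the paper intends: Theorem \ref{thfi} is presented there as an immediate corollary of Theorems \ref{mth} and \ref{thsc}, with the remark that conditions 1 and 2 together are equivalent to nonnegativity on $\mathfrak{S}^{(1\div k)}$ while condition 3 supplies a detected entangled state. You have simply spelled out the logical glue (the spanning argument for condition 1 and the sign-pinning via (\ref{eq.nsc2}) for sufficiency) that the paper leaves implicit.
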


Note that conditions 1 and 2 are necessary and sufficient for $a$ to be positive on the set of  separable states, while condition 3 assures that at least an entangled state  exists that is detected by $a$.

Moreover, as it was noted in the introduction, this theorem can be translated into a criterion for positive maps via the CJ-isomorphism.
Let us consider in fact, the following entangled bipartite vector:
\begin{equation}
|\alpha \rangle=\sum_{i=1}^n|e_i^{(1)} \rangle|e_i^{(2)} \rangle
\end{equation}
where $|e_i^{(1)} \rangle\in\mathscr{H}^{(1)}$, $|e_i^{(2)} \rangle\in\mathscr{H}^{(2)}$ and $\langle e_j^{(l)} |e_i^{(l)} \rangle=\delta_{ij}$, ($l=1,2$).
Given  a map $\Lambda:\mathcal{L}(\mathscr{H}^{(1)})\rightarrow \mathcal{L}(\mathscr{H}^{(1)})$, from the CJ-isomorphism, it is simple to show that
\begin{theorem}

The map $\Lambda$ is a positive, but not completely positive, map if and only if
$
\Lambda\otimes \mathbf{I}^{(2)}(|\alpha \rangle\langle \alpha|)
$, where $\mathbf{I}^{(2)}$ is the identity map on $\mathcal{L}(\mathscr{H}^{(2)})$,
is an entanglement witness on $\mathcal{H}(\mathscr{H}^{(1)}\otimes\mathscr{H}^{(2)})$.
\end{theorem}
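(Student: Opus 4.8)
The plan is to use the Choi-Jamio{\l}kowski (CJ) correspondence to translate the two properties of $\Lambda$---positivity, and the failure of complete positivity---into the two defining properties of an entanglement witness for the operator $C_\Lambda:=\Lambda\otimes\mathbf{I}^{(2)}(|\alpha\rangle\langle\alpha|)$. Writing $|\alpha\rangle=\sum_i|e_i^{(1)}\rangle|e_i^{(2)}\rangle$ in the fixed bases, one has $C_\Lambda=\sum_{ij}\Lambda(|e_i^{(1)}\rangle\langle e_j^{(1)}|)\otimes|e_i^{(2)}\rangle\langle e_j^{(2)}|$, which is Hermitian precisely when $\Lambda$ is Hermiticity preserving (automatic for positive maps, so that $C_\Lambda\in\mathcal{H}(\mathscr{H}^{(1)}\otimes\mathscr{H}^{(2)})$ and the inner product $(\cdot,C_\Lambda)$ is real valued). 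I will establish two ``dictionary'' equivalences and then combine them with the characterisation of entanglement witnesses given by Theorem \ref{thfi}.

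First I would record the standard fact (Choi's theorem) that $\Lambda$ is completely positive if and only if $C_\Lambda\in\mathcal{H}^+(\mathscr{H}^{(1)}\otimes\mathscr{H}^{(2)})$; this may be invoked as known. Second, and this is where the actual computation sits, I would evaluate the inner product of $C_\Lambda$ with a generic pure product state $p=|\phi\rangle\langle\phi|\otimes|\psi\rangle\langle\psi|$. Expanding $C_\Lambda$ and tracing out the two factors yields
\begin{equation}
(C_\Lambda,p)=\langle\phi|\Lambda(|\bar\psi\rangle\langle\bar\psi|)|\phi\rangle,
\end{equation}
where $|\bar\psi\rangle=\sum_i\overline{\langle e_i^{(2)}|\psi\rangle}\,|e_i^{(2)}\rangle$ is the complex conjugate of $|\psi\rangle$ in the chosen basis. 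Since $|\bar\psi\rangle$ ranges over all unit vectors as $|\psi\rangle$ does, the right-hand side is nonnegative for every $|\phi\rangle,|\psi\rangle$ if and only if $\Lambda(|\chi\rangle\langle\chi|)\geq0$ for all $|\chi\rangle$, i.e. if and only if $\Lambda$ is positive. Because $\mathfrak{S}^{(1\div 2)}$ is the convex hull of $\mathfrak{P}^{(1\div 2)}$, this extends by linearity to the statement that $\Lambda$ is positive if and only if $(C_\Lambda,\sigma)\geq0$ for all $\sigma\in\mathfrak{S}^{(1\div 2)}$.

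Combining the two equivalences, $\Lambda$ is positive but not completely positive exactly when $C_\Lambda$ is nonnegative on every separable state while $C_\Lambda\notin\mathcal{H}^+$. It remains to identify this with the entanglement-witness property, which is the content of conditions 1--3 of Theorem \ref{thfi}: if $C_\Lambda$ is nonnegative on $\mathfrak{S}^{(1\div 2)}$ but is not a positive operator, then some state $\varrho$ satisfies $(C_\Lambda,\varrho)<0$, and such a $\varrho$ cannot be separable, hence is entangled, so $C_\Lambda$ is an EW; conversely an EW is nonnegative on $\mathfrak{S}^{(1\div 2)}$ and strictly negative on some entangled state, so it fails to lie in $\mathcal{H}^+$, and it is not CP by Choi's theorem. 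The main obstacle is the careful bookkeeping of the displayed product-state evaluation---in particular tracking the conjugation $|\psi\rangle\mapsto|\bar\psi\rangle$ and verifying that the nonnegativity quantifiers on the two sides genuinely match---after which the theorem follows by assembling the two dictionary entries with Theorem \ref{thfi}.
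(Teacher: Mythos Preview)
Your argument is correct and is exactly the standard Choi--Jamio{\l}kowski computation; the paper itself does not prove this theorem, merely stating it as an immediate consequence of the CJ isomorphism, so you have supplied precisely the details the author left implicit. One cosmetic remark: your appeal to Theorem~\ref{thfi} in the final paragraph is superfluous, since the argument you actually give there uses only the \emph{definition} of an entanglement witness (nonnegativity on $\mathfrak{S}$ together with a negative value on some entangled state), not the separable-tangent-space criterion; and note that $|\bar\psi\rangle$ as written lives in $\mathscr{H}^{(2)}$, so feeding $|\bar\psi\rangle\langle\bar\psi|$ to $\Lambda$ tacitly uses the basis identification $|e_i^{(2)}\rangle\leftrightarrow|e_i^{(1)}\rangle$ built into $|\alpha\rangle$.
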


We conclude  observing that we can improve the criterion introduced in this paper. In fact, we can limit  our attention in theorems \ref{mth} and \ref{thsc} only on a linear independent set of pure product states in $\mathfrak{P}^{(1\div k)}\cap\mathcal{I}_a$.

This last is an immediate consequence of the following trivial remark:
\begin{align}
\mbox{if}& \,p=\sum_{i=1}^l\alpha_i p_i, \alpha_i\in \mathds{R}, p_i\in\mathfrak{P}^{(1\div k)}\cap\mathcal{I}_a \\
&\mbox{then}\,\mathfrak{C}(p)\subseteq \mathfrak{C}(p_1)+\ldots+\mathfrak{C}(p_l).
\end{align}

Finally we observe that the set $\mathfrak{P}^{(1\div k)}\cap\mathcal{I}_a$  was widely studied in \cite{LKCH00} in connection to the optimization of EW. There, the authors gave some useful method to determine this set and indeed theorem \ref{thfi} gives a practical method to check if an observable is an EW.

\bibliography{bibliografia}

\end{document}